\title{
	Conformal reference frames for Lorentzian manifolds
}
\author{Innocenti V. Maresin}
\let\C\@undefined\makeatother\fi
\newcommand{\C}{{\mathbb{C}}}
\undefined\newcommand{\R}{{\mathbb{R}}}\fi
\newcommand{\mul}{\hspace{0.125em}}
\newcommand{\abs}[1]{{\left|{#1}\right|}}
\newcommand{\noZero}{\setminus\{0\}}
\newcommand{\Mink}{{\mathbb{M}}}
\newcommand{\Nfin}{\mathbf{N}_{\mathrm{a}}} 
\newcommand{\Md}{{\Mink^{1+d}}}
\newcommand{\X}{X}
\newcommand{\sky}[1]{{\mathfrak{S}_{{#1}}}}
\let\scoord\varsigma 
\newcommand{\Csky}[1]{{\mathbb{S}^*_{{{#1}}1,0}}} 
\newcommand{\hs}{\Gamma_{\mathrm{hol}}} 
\newcommand{\twId}[1]{{{\scriptstyle [}{{#1}}{\scriptstyle ]}}}
\let\fib F
\newcommand{\fibs}[1]{F_{{#1}}}
\newcommand{\skyreg}[1]{\Omega_{{#1}}}
\newcommand{\NM}{N_{M}}
\newcommand{\mfib}{j}\let\toM\mfib
\newcommand{\msky}[1]{{\mathfrak{M}_{{#1}}}}
\newcommand{\xfib}{{\mathrm x}}
\newcommand{\bPm}{p} 
\let\Th N 
\newcommand{\Tv}{T_{\mathrm v}}
\newcommand{\CPone}{{{\C}\mathbf{P}^1}}
\newcommand{\cW}[1]{{\mathcal{O}_{{#1}}(1,0)}} 
\newcommand{\spinorsSections}[1]{{\mathbb{S}^{1,0}_{{#1}}}} 
\newcommand{\ConjcW}[1]{{\mathcal{O}_{{#1}}(0,1)}}
\newcommand{\ConjspinorsSections}[1]{{\mathbb{S}^{0,1}_{{#1}}}} 
\let\HB\scoord
\newcommand{\Rfuture}[1]{{\mathcal{L}^+_{{#1}}}}
\newcommand{\RR}[1]{{\mathcal{L}^\R_{{#1}}}}
\newcommand{\RC}[1]{{\mathcal{O}_{{#1}}(1,1)}}
\newcommand{\sM}{\operatorname{s_\Mink}}
\newcommand{\s}{\operatorname{s}}
\let\contact\vartheta
\let\contactfib\vartheta
\let\Conj\bar
\let\ConjHB\pi
\newcommand{\Conjsky}[1]{{{\Conj{\mathfrak{S}}}_{{#1}}}}
\newcommand{\CConjsky}[1]{{\mathbb{S}^*_{{{#1}}0,1}}} 
\newcommand{\oneHalf}{{^1\!/\!_2}}\theoremstyle{plain}
 \newtheorem{theorem}{Theorem}
 \newtheorem{proposition}{Proposition}
\theoremstyle{definition}
 \newtheorem*{definition}{Definition}
 \newtheorem*{example}{Example}
 \newtheorem*{acks}{Acknowledgements}
\theoremstyle{remark}
 \newtheorem*{corollary}{Corollary}
 \newtheorem*{remark}{Remark}
\begin{document}
\maketitle
\begin{abstract}
The definition of a conformal reference frame is given, that is,
of a special projection of the six-dimensional skies bundle of a Lorentzian manifold
(or five-dimensional twistor space) 
to a three-dimensional manifold.
An example is constructed—conformal compactification—for the Minkowski space.
The {\em celestial transform} of Lorentzian vectors is defined, a kind of spinor correspondence, 
based on the complex structure on the skies.
An 1-form generating the contact structure in the twistor space (when the latter is smooth)
is expressed explicitly as a (line bundle)-valued form.
A theorem is proved on the projection of the said 1-form
to the fiberwise normal bundle of a reference frame.
It entails the {\em flow of time equation} that expresses 
the space-time derivative of sky images in terms of the celestial transform of 4-vectors.

The Appendix is less mathematical than the main body 
and discusses the causal relation 
in context of the FLRW cosmology and its “natural” conformal reference frame. 
\end{abstract} 

\begin{acks}
This research is supported in part by the Russian Foundation for Basic Research,
(grant 16-01-00117~A “Complex Problems in Mathematical Physics”)
and the Russian Academy of Sciences 
(programme I.37$\Pi$ “Non-linear Dynamics in the Mathematical and Physical Sciences”, 
project 0014-2015-0037 “Complex and algebro-geometric methods in problems of nonlinear dynamics”). 
The author thanks A.~G. Sergeev for the help in preparing the work for publication, 
and Bill Everett for some suggestions on the English text.
\end{acks}

This paper (except the Appendix) is about to be published in 
{\it Theoretical and Mathematical Physics}, {\bf 191}(2): 682–691 (2017), 
DOI: 10.1134/S0040577917050099. 
{\small The journal version has minor differences 
due to the redaction’s preferences for vocabulary, spelling, notation, punctuation, and grammar; 
also due to Russian–English text correspondence policy. 
This version follows the author’s preferences and also bears some later fixes.}

\subsection{Introduction}
Lorentzian manifolds are the standard framework for the space-time in physics, and there are some mathematical techniques to explore their geometry. 
The twistor approach to Lorentzian manifolds is focused 
on the five-dimensional space~$\mathfrak{N}$ of null geodesics. 
In the case of Minkowski space, that space can be embedded in a complex projective space, 
the {\em twistor space}, 
but in general (curved) case any canonical complex structure on twistors isn’t possible. 
Wherever $\mathfrak{N}$ could be defined,
for each point $x\in\X$ its sky $\sky x$ is embedded (or immersed) into~$\mathfrak{N}$, 
and $\mathfrak{N}$ is endowed with the natural contact structure 
\cite{LowConjecture}. 
A sky $\sky x$ has natural conformal structure and is isomorphic to $\CPone$. 
This allows formulating the well-known twistor correspondence 
in a geometrized way agreeable to the curved case; 
it’s free from aforementioned drawback~– 
inability to use any complex structure globally.

A conformal reference frame, defined in this paper,
gives a description of (four-dimensional) Lorentzian manifold~$\X$ 
in terms of immersion (sometimes partial) of its skies into a three-dimensional manifold. 
It is known that in the case of a globally hyperbolic~$\X$ 
its $\mathfrak{N}$ identifies with the spherical cotangent bundle over a Cauchy surface~$M\subset\X$. Generalization to certain weaker conditions on~$\X$ are possible as well.
Then, instead of spheres—submanifolds of $\mathfrak{N}$—we can consider
the image of~$\sky x$ in~$M$, an immersion in non-singular case.
Conformal reference frames generalize Cauchy surfaces in some way. 
Ignoring the technical difficulties of differential geometry on~$\mathfrak{N}$, 
conformal reference frame should be understood as 
a smooth mapping of~$\mathfrak{N}$, or part thereof, 
to an arbitrary 3-manifold $M$, the mapping compatible with the contact structure 
and having certain non-degeneracy condition 
(i.e., for every~$x\in\X$ the mapping $ T\sky x \to TM$ has rank~2 
on a non-empty open subset of~$\sky x$). 
In this paper we present the {\em flow of time equation} 
expressing the dependence of the sky image $\sky x$ on the point~$x$, 
and how it is related to the contact structure and causality, 
from the differential rather than the topological standpoint. 
It is shown that holomorphic sections of the line bundle~$\cW{}$ over the sky 
should be considered spinors, 
and the “bundle of sizes”~$\RR{}$ (with the weight-$(\oneHalf, \oneHalf)$ representation of~$\mathrm{SL}(2,\C)$) 
is the natural range of the contact form on the twistor space~$\mathfrak{N}$. 

\section{The bundle of skies}
\subsection{Preliminaries}
This subsection sets out the facts known about Lorentzian manifolds.
Usually, they are understood as pseudo-Riemann manifolds of signature~(1,3)
i.e. $({+}{-}{-}{-})$\footnote{
	Many authors use a $({-}{+}{+}{+})$ metric.
	The difference may affect algebraic aspects of the theory, but the geometry remains the same.
}
This paper requires an additional structure, namely:
\begin{definition}
A {\em Lorentzian manifold} is a pseudo-Riemann four-dimensional manifold~$\X$ 
with the metric $g$ of the signature $({+}{-}{-}{-})$ 
and the {\em time orientation} at each point $x\in\X $ 
(i.e. one of the two connected components of the cone $\{\, v \in T_x\X \ | \ g(v) > 0\,\} $ is chosen 
as “chronological future”), in a continuous fashion.
\end{definition}
Manifolds~$\X$ satisfying this definition are referred to as {\em space-times} in \cite{LowConjecture}.
All manifolds are assumed to be smooth ($C^\infty$).

\begin{definition}
For each $x\in\X$ the boundary of its “chronological future” cone in~$T_x\X$ 
is called {\em the future light cone} and is denoted by $\mathcal{C}^+_x$.
\end{definition}
\begin{definition}
The sky $\sky x$ is the base of the cone~$\mathcal{C}^+_x$ in~$T_x\X$, 
and elements of the former will be denoted $\mathrm{P}v$, where $v\in\mathcal{C}^+_x\noZero$.
The disjoint union of all skies (of all points of $\X$) forms 
a smooth locally trivial bundle over~$\X$, 
denoted by~$\sky{}\X$, 
with the projection map $\xfib: \sky{}\X \to \X$.
\end{definition}
\begin{remark}
An element of~$\sky x$ is virtually a null direction at~$x$,
(1-subspace in~$T_x\X$),
whereas $\mathrm{P}: v\mapsto\{\,\lambda\mul v\ |\ \lambda\in\C\,\}$
means projectivization.
\end{remark}
\begin{definition}
Let $\Tv\sky{}\X$ denote the disjoint union of all tangent bundles $T\sky x$ for all~$x\in\X$. 
In other words: the vertical subbundle~$\ker d\xfib$ in~$T(\sky{}\X)$.
\end{definition}
\begin{definition}
The {\em geodesic flow} $\fibs{}\X$ is a distribution of 1-subspaces 
in the tangent bundle $T(\sky{}\X)$ of the total space of the bundle~$\sky{}\X$, 
defined by the equations $dx \parallel v $ (the differential of $x$ is collinear to~$v$), 
$\nabla v = 0$ ($v$ is constant w.r.t. the Levi-Civita connection), 
where $v\in\mathcal{C}^+_x \noZero$ is a vector representing given point of the sky. 
\end{definition}

\begin{remark}
Integrating the flow~$\fibs{}\X$ gives 
the “light” foliation of the total space $\sky{}\X$. 
Its leafs represent null\footnote{
	{\it TMF} replaced the word “null” with “isotropic” here and
	in some further instances.
} (light-like) geodesic curves on~$\X$,
maximally extended in both time directions and lifted to~$\sky{}\X$ naturally, 
by the mapping~$\mathrm{P}$ of tangent vectors to respective skies.
\end{remark}
\begin{definition}
The equivalence relation $\twId {(x_1, \mathrm{P}v_1)} = \twId{(x_2, \mathrm{P}v_2)}$ 
on the total space of $\sky{}\X$ 
is a property of null vectors $v_1\in\mathcal{C}^+_{x_1} \noZero$ and $v_2\in\mathcal{C}^+_{x_2} \noZero$ 
to lie on the same null geodesic, 
i.e. a leaf of the foliation; see previous remark.
The null vectors represent certain null directions.
\end{definition}

From \cite{LeBrun} and \cite{LowConjecture} we know that $\mathfrak{N}$,
that is defined as the quotient space of $\sky{}\X $ by 
the equivalence relation introduced above, 
possesses a natural contact structure (when $\mathfrak{N}$ is smooth). 
In this paper a weaker version of this statement will be used, 
adapted to the fact that $\mathfrak{N}$ does not always admit a manifold structure.
\begin{definition}
Let’s choose
for each point $w\in\sky{}\X$ its representative $v\in\mathcal{C}^+_x \noZero$, 
where $x : = \xfib(w)$, in a smooth fashion.
Then $\contactfib = (v.dx)$ is an 1-form on the total space $\sky{}\X$. 
\end{definition}
\begin{remark}
Obviously, the 1-form defined so is meaningful only up to multiplication by a positive function. 
However, there is a real line bundle defined {\em globally} over~$\sky{}\X$
that is the range of the form~$\contactfib$ invariantly defined. 
Moreover, that line bundle is oriented (that is, has the positive side marked).
This construction will be postponed until 2.2.
\end{remark}
\begin{remark}
The 1-form $\contactfib$ is smooth and never vanishes. 
But $\contactfib$ nullifies all the $\fibs{}\X$ (because 
every Lorentzian null direction is orthogonal to itself) 
and $\Tv\sky{}\X$ (because $d\xfib = 0$).
In other words, the five-dimensional bundle $T(\sky{}\X)\ /\ \fibs{}\X$ has 
a continuous distribution of homogeneous co-oriented hyperplanes~$\contactfib = 0$ 
containing $T{\sky x}$ for all~$x\in\X$. 
\end{remark}

\subsection{Defining a conformal reference frame}
\begin{definition}
A conformal reference frame~$(\skyreg{}, M, \toM)$ of the Lorentzian manifold~$\X$ is defined as:
\begin{itemize}
\item such open subset $\skyreg{} \subset \sky{}\X$ that 
each its fiber $\skyreg x := \skyreg{} \cap \sky x,\ x\in\X$ is not empty;
\item a 3-dimensional smooth manifold~$M$;
\item such smooth map~$\mfib$ of $\skyreg{}$ to~$M$ that satisfies following conditions:
$$
\text{(f)}\quad\quad \forall w_1,w_2\in\skyreg{}\quad
\twId{w_1}=\twId{w_2}\Rightarrow \mfib(w_1)=\mfib(w_2)
$$
(i.e. the map is constant along leafs of the “light” foliation),
$$
\text{(c)}\quad\quad\quad\quad\quad\forall w\in\skyreg{}\quad
\contactfib|_w \in \mfib^*(T^*_{\mfib(w)}M)\quad
$$
(i.e. the oriented distribution of 1-subspaces in $T^*\skyreg{}$, represented by the form $\contactfib$, belongs everywhere to~$\mfib^*(T^*M)$, the inverse image of the cotangent bundle),
and also\\
\strut~\hspace{4em}(d) \ \ the derivative of~$\mfib$ along $T\skyreg x$ (fibers of~$\skyreg{}$) is not degenerate
(i.e. the pushforward $\mfib_*: \Tv\skyreg{} \to TM$ of vertical tangent vectors has rank~2 everywhere).
\end{itemize}
\end{definition}
The image of $\skyreg x$ by~$\mfib$—a surface immersed
to $M$—will be denoted by $\msky x$ and called a {\em sky image}.
\begin{example}
Let $M$ be a Cauchy surface in a globally hyperbolic Lorentzian manifold. 
Denote by $\X$ the part of said manifold that lies after $M$, \footnote{
	With respect to the Big Bang cosmology (Robertson–Walker spaces), 
	$M$ can be, more generally, a {\em conformal} boundary of~$\X$;
	see~\cite{Hawking} 6.8 and 10.
}
and let $\toM$ be the projection (along null geodesics) 
to the {\em twistor bundle} $ST^*M$ (see~\cite{LowConjecture}) 
followed by projection of the latter onto~$M$. 
Also, let $\skyreg{}$ be the non-singular subset of $\sky{}\X$ with respect to 
mapping of the skies, 
i.e. where the mapping $\mfib_*: \Tv\sky{} \to TM$ has rank~2. 
If everywhere on~$\X$ $\skyreg x$ isn’t empty, 
then $(\skyreg{}, M, \toM)$ is a conformal reference frame for the manifold~$\X$.
\end{example}

\subsection{The derivative of sky images}
In this subsection we intend to define the derivative 
of the sky image $\msky x := j(\skyreg x)$ for $x\in\X$ with respect to~$x$.
We regard 
the homomorphism $dj: T\skyreg{} \to j^*(TM)$ of bundles over $\skyreg{}$ 
from the tangent bundle of~$\skyreg{}$ 
to the inverse image of the tangent bundle of~$M$ 
as the differential of~$j$. 
It must be distinguished from the pushforward homomorphism.
\begin{definition}
For a conformal reference frame, we call
$$
\NM\skyreg{} := j^*(TM)\ /\ dj(\Tv\skyreg{})
$$
the {\em fiberwise normal bundle}, 
where $\Tv\skyreg{} := \Tv\sky{}\X|_{\skyreg{}}\,$.
\end{definition}
\begin{remark}
Restriction of the fiberwise normal bundle $\NM\skyreg{}$ to~$\skyreg x$, 
i.e. its “fiber” $\NM\skyreg x = j|_{\skyreg x}^*(TM)\ /\ dj(T\skyreg x)$ is,
away of self-intersections,
the normal bundle~$N\msky x$ of the surface $\msky x$ in~$M$ 
pulled to~$\skyreg x$ by $ j: \skyreg{} \to M$.
\end{remark}
\begin{remark}
The concept of the normal bundle does not require any kind of structure but differentiability; 
it’s merely a quotient space.
\end{remark}
\begin{definition}
Given a local trivialization of~$\xfib: \skyreg{}\to\X$ (that is, such domain~$U\subset\X$ 
that $U\times\Delta$ is a subdomain of~$\skyreg{}$, where $\Delta$ is the standard disk along the fibers and $\xfib$ becomes a projection of~$U\times\Delta$ onto~$U$) 
and a mapping to~$M$ specified as $j: U\times\Delta \to M$,
the {\em derivative} of the family~$\msky x$ with respect to~$x\in\X$ is defined (locally) as the
homomorphism of vector bundles from~$\Delta\times T\X$ to the corresponding piece of~$\NM\skyreg{}$:
$$
d\msky x 
:= \frac{\partial j}{\partial x} / 
dj(T\Delta)
$$
where “/” denotes the quotient (by the image of $T\Delta$ under $j$). 
In other words, the vectors $\frac{\partial j}{\partial x}(\xi)\in j^*(TM)$ 
for all $(x,z)\in U\times\Delta,\ \xi\in T_x\X$ 
represent $d\msky x (\xi)\in\NM\skyreg{}$ (at the same $(x,z)$).
\end{definition}
The local trivialization exists everywhere and 
the value of the derivative~$d\msky x$ is obviously independent of its choice.

\noindent\parbox[b]{240pt}{
\begin{definition}
The bundle $\Th\skyreg{} := T\skyreg{} / \Tv\skyreg{}$ is a natural (independent of trivialization) 
domain of the mapping~$d\msky x$; see the diagram. 
Similarly, for all $\sky{}\X$
we denote by $\Th\sky{}\X := \xfib^*(T\X) = T(\sky{}\X)\ /\ \Tv\sky{}\X$ 
the tangent bundle to~$\X$ pulled to~$\sky{}\X$. 
Also, $d\xfib$ always denotes the projection of $T$ onto respective $\Th$. 
\end{definition}
}\hspace{24pt}
\parbox[b]{100pt}{\begin{diagram}
\Tv\skyreg{}	&\rTo	&{dj(\Tv\skyreg{})} \\
   \dInto	&	&\dInto \\
T\skyreg{}	&\rTo^{dj} &{j^*(T M)} \\
\dOnto^{d\xfib}	&\rdTo^{\bPm} &\dOnto \\
\Th\skyreg{}	&\rTo^{d\msky x} &\NM\skyreg{}
\end{diagram}
}
\begin{definition}
Under the same conditions we define 
$\displaystyle \bPm := dj\ / \ dj(\Tv\skyreg{}): T\skyreg{} \to \NM\skyreg{}$~– 
the tautological projection of~$T\skyreg{}$ onto the fiberwise normal bundle.
\end{definition}

\section{Example: Minkowski space}
\subsection{Definition and the spinor correspondence}
\begin{definition}
The {\em algebraic spinor correspondence} or the {\em Pauli transform} maps 
a vector from~$\R^4$, defined by the components $(x^0, x^1, x^2, x^3)$, 
to the Hermitian $2\times 2$ matrix 
$$
x^{AA'} = \frac{1}{2}\mul\begin{pmatrix}
x^0 + x^3	&& x^1 + i\mul x^2 \\
x^1 - i\mul x^2	&& x^0 - x^3
\end{pmatrix}.\quad\footnote{
	The real factor in the formula is more often chosen as $1/\sqrt{2}$ or $1$.
	Its value hasn’t importance for this paper.
}
$$
\end{definition}
\begin{remark}
It is easy to verify that every null 
(with respect to the $\eta = (dx^0)^2 - (dx^1)^2 - (dx^2)^2 - (dx^3)^2$ metric) 
vector $x\in\R^4$ under the condition $x^0\ge 0$ 
can be expressed as
$$
x^{AA'} = \psi^A \mul {\Conj\psi}^{A'},\ \psi\in\spinorsSections{},\quad
\text{where~}\spinorsSections{} := \C^2
$$ 
and, conversely, any $\psi\in\spinorsSections{}$ thus gives a vector from~$\mathcal{C}^+$ in this way. 
\end{remark}
\begin{definition}
The {\em Minkowski space} $\Mink$ is the~$\R^4$ coordinate space 
endowed with the standard pseudo-Euclidean metric~$\eta$ specified above.
\end{definition}
\begin{remark}
Here the action of the group~$\mathrm{SO}^+(1,3)$ on~$\Mink$ induces 
a representation of the group~$\mathrm{SL}(2,\C)$ on $\mathrm{Herm}(2)$ defined by 
$$
\mathrm{SL}(2,\C)\ni C\quad:
\quad x^{AA'} \longmapsto C^A_L \mul x^{LL'} \mul \Conj C_{L'}^{A'}
$$
(or, as matrix multiplication, $x^{AA'}$ is multiplied by~$C$ on the left and by~$C^*$ on the right); 
that is the tensor product of two 2-dimensional representations 
on the spaces $\spinorsSections{}$ and $\ConjspinorsSections{}$ (see below) respectively. 
\end{remark}
\begin{definition}
The dual space~$\Csky{}$ of spinors is the linear complex space 
dual to~$\spinorsSections{}$, 
and it’s equipped with 
a representation of the group $\mathrm{SL}(2,\C)$
that multiplies row vectors by the inverse matrix~$C^{-1}$ on the right.
\end{definition}
\begin{remark}
All the bundle $\sky{}\Mink$ has a natural trivialization, 
arising from the trivialization of $T\Mink$ by translations. 
Correspondence between projectivizations on the cone~$\mathcal{C}^+$ and 
the space~$\spinorsSections{}$ also enables
identification of~$\sky{}$—the base of the cone—with the Riemann sphere $\CPone = \mathbf{P}\spinorsSections{}$. 
On the other hand, each $\mathrm{P}\psi,\ \psi\in\spinorsSections{} \noZero$ corresponds 
to a homogeneous line $\{\,\scoord\in\Csky{}\ |\ \scoord \mul \psi = 0\,\}$, 
hence the projectivizations of the spaces $\Csky{}$ and $\spinorsSections{}$ are canonically isomorphic. 
\end{remark}
For the reasons explained below, it is convenient to represent a point of a sky in the Minkowski space 
as “$\mathrm{P}\scoord$”, where $\scoord\in\Csky{} \noZero$, 
and to identify it with a complex homogeneous line in~$\Csky{}$. 

\subsection{Complex line bundles}
\begin{definition}
The bundle $\mathcal{O}(k,l),\ k,l\in\mathbb{Z}$ over $\sky{}$ is a complex line bundle 
whose fiber at a point $\mathrm{P}\scoord,\ \scoord\in\Csky{} \noZero$ is a 1-dimensional space 
of all homogeneous bidegree~$(k,l)$ complex-valued functions 
on the homogeneous line~$\{\,\lambda \mul \scoord\ |\ \lambda\in\C\,\}\subset\Csky{}$.
The term “homogeneous bidegree” refers to the property
$$
\forall f\in\mathcal{O}(k,l)_{\mathrm{P}\scoord}\ \forall\lambda\in\C:
f(\lambda\mul\scoord) = \lambda^k\mul\Conj\lambda^l\mul f(\scoord).
$$
\end{definition}
\begin{remark}
Every homogeneous bidegree $(k,l)$ function on the whole $\Csky{}$ 
defines a section of~$\mathcal{O}(k,l)$ by its restriction to all complex homogeneous lines.
\end{remark}
\begin{remark}
In particular, $\cW{}$ is a holomorphic bundle, 
and every $\omega\in\spinorsSections{}$ defines its holomorphic section 
as a linear functional on~$\Csky{}$. 
The space of spinors $\spinorsSections{} = \hs(\cW{})$ 
is thus identified with the space of holomorphic sections.
\end{remark}
\begin{definition}
$\ConjspinorsSections{} := \overline{\spinorsSections{}} = \Gamma_{\mathrm{antihol}}(\ConjcW{})$ 
is the complex conjugate space of spinors.
Also, it has the dual space $\CConjsky{}$.
\end{definition}
\begin{remark}
As can be seen from the representation, $2\times 2$ Pauli matrices should be understood 
as elements of the tensor product $\spinorsSections{} \otimes \ConjspinorsSections{}$, 
so they specify sections of the bundle~$\RC{}$.
\end{remark}
\begin{definition}
The {\em celestial transform} $\sM: \Mink \to \Gamma(\RC{})$ 
for Lorentzian vectors denotes the same map 
from $\Mink$ to $\spinorsSections{} \otimes \ConjspinorsSections{}$ as for algebraic one, 
but its values are interpreted as (1,1)-homogeneous 
nonholomorphic polynomials $\scoord_A \mul x^{AA'} \mul \Conj\scoord_{A'}$ on $\Csky{}$ 
or, equivalently, as sections $\HB_A \cdot x^{AA'} \cdot \Conj\HB_{A'}$ 
of the nonholomorphic bundle~$\RC{} = \cW{} \otimes \ConjcW{}$ over~$\sky{}$. 
\end{definition}
We may tell the celestial transform the {\em geometric} spinor correspondence.
Moreover, $\sM$ will also denote the respective mapping 
from $\Mink \times \sky{} = \sky{}\Mink$ to the total space of~$\RC{}$.
\begin{remark}
Two bundles $\RC{}$ constructed over $\sky{}$ and $\Conjsky{} = \mathbf{P}\CConjsky{}$ do not differ 
except in the complex structure on their bases. 
They are canonically isomorphic as complex linear bundles over surfaces. 
Therefore, the celestial transform withstands exchange $\sky{}$ with $\Conjsky{}$. 
\end{remark}

%
%
\begin{definition}
$\Rfuture{}$ denotes the subbundle of non-negative functions in $\RC{}$. 
In a formula: $\Rfuture{} = \{\,\zeta \cdot \Conj\zeta\ |\ \zeta\in\cW{}\,\}$. 
An $\R$-linear bundle of real-valued (1,1)-homogeneous functions, 
containing $\Rfuture{}$, will be denoted by $\RR{}$. 
\end{definition}
\begin{remark}
The fibers of~$\Rfuture{}$ are equivalent to 
the ray~$[0, +\infty)$ up to multiplication by a positive number.
It is also easy to see that all values of the transform~$\sM$ defined above 
are, in fact, sections of~$\RR{}$.
\end{remark}
\begin{remark}
The bundles have naturally defined operations 
$\zeta\in\cW{} \mapsto \zeta \cdot \Conj\zeta $, denoted by ${\abs\ }^2$ (modulus squared) 
and $\abs\ : (T\sky{} = \mathcal{O}(2,0)) \to \Rfuture{}$, 
and both are continuous maps of bundles.
\end{remark}
The $\Rfuture{}$ is called the {\em bundle of sizes}, and its sections are {\em fields of sizes}.
The bundle/sections of~$\RR{} $ are {\em signed bundle/fields of sizes}.
\begin{remark} 
Representations of~$\mathrm{SL}(2,\C)$ on $\Csky{}$ and the spaces of spinors
generate a (consistent) action of the same group on the sky
that corresponds to the 
covering
$$
\{\pm 1\} \hookrightarrow \mathrm{SL}(2,\C) \ ^{2:1}\hspace{-1.5em}\longrightarrow 
\mathrm{PSL}(2,\C) \to 1.
$$
It preserves $T\sky{}$ and $\cW{}$ as holomorphic bundles 
($\RC{}$ and other~– as line bundles), 
while vector fields and sections of $\RC{}$ 
are transformed as they would be simply by the Riemann sphere’s motions.
\end{remark}

\begin{remark}
In accordance with the definition of~$\contactfib$ from~1.1, 
for each $w\in\sky{}\Mink,\ x := \xfib(w)$ we have to choose 
such null $v\in\mathcal{C}^+_x \noZero$ 
that represents the point $w\in\sky x$. 
Let’s express $v$ in the covariant form (i.e. to make it an element of~$T^*\Mink$) 
as $v_{AA'} = \scoord_{A} \mul \Conj\scoord_{A'}$. 
For each tangent vector~$\Xi\in T_w\sky{}\Mink$ 
we have $\contactfib(\Xi) = v \mul \xi = \scoord_{A} \mul \xi^{AA'} \mul \Conj \scoord_{A'},\ \xi := \xfib_*(\Xi)$, 
hence
\begin{equation}
\contactfib\quad\propto\quad
\scoord_{A}\mul d\xfib^{AA'}\mul\Conj\scoord_{A'}\ :\ T(\sky{}\Mink)\to\RR{}.
\tag{1}
\end{equation}
The symbol “$\propto$” here denotes equality up to multiplication by a positive number.
\end{remark}

\subsection{The construction of Poincaré-invariant conformal reference frame}
\begin{proposition} 
The collection $(\skyreg{} := \sky{}\Mink,\ M := \RR{},\ \mfib := \sM)$
is a conformal reference frame for~$\Mink$. 
\end{proposition}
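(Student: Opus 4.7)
The strategy is to verify the three defining conditions (f), (c), (d) of a conformal reference frame directly, using the explicit formula $\sM(x,\mathrm{P}\scoord) = \scoord_A\mul x^{AA'}\mul\Conj\scoord_{A'}$ and the contact form expression~(1).

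For (f), I would describe a leaf of the light foliation through $(x,\mathrm{P}\scoord)$ explicitly. In $\Mink$ a null geodesic has constant tangent, which at the direction $\mathrm{P}\scoord$ is (up to a positive scalar) $v^{AA'} = \omega^A\Conj\omega^{A'}$, where $\omega\in\spinorsSections{}$ represents the same sky point as $\scoord$, i.e.\ $\scoord_A\omega^A = 0$. Substituting $x + t\mul v$ for $x$, the value of $\sM$ acquires an increment $t\mul(\scoord_A\omega^A)(\Conj\scoord_{A'}\Conj\omega^{A'}) = 0$, so $\sM$ is leaf-constant.

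Condition (d) is nearly immediate: $M = \RR{}$ is a real line bundle over $\sky{} = \CPone$ (hence a $3$-manifold), and composing $\sM$ with the projection $\RR{}\to\sky{}$ is simply the standard sky projection $\sky{}\Mink\to\sky{}$ (which on Minkowski space is a global product projection). Therefore the restriction of $\sM$ to a single sky $\skyreg x$ is a global smooth section of $\RR{}\to\sky{}$, in particular an embedding of $S^2$ into $M$, so $d\sM$ restricted to vertical tangent vectors has rank $2$ everywhere.

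For (c), I would reformulate the condition as: $\contactfib$ annihilates $\ker d\sM$. Since $\sM$ covers the standard sky projection, any $\Xi\in\ker d\sM$ has vanishing vertical part and so is represented by a horizontal vector $\xi := d\xfib(\Xi)\in T_x\Mink$. For such $\Xi$, $d\sM(\Xi)$ reduces to the derivative of the fiber coordinate of $\RR{}$ along $\xi$, namely $\scoord_A\mul\xi^{AA'}\mul\Conj\scoord_{A'}$, which by~(1) is (up to a positive scalar) exactly $\contactfib(\Xi)$. Hence $\ker d\sM\subseteq\ker\contactfib$, establishing (c). The main delicate step is this last one, where one must carefully match the $\RR{}$-valued form $\contactfib$ with the $\R$-valued differential of the fiber coordinate on $M=\RR{}$, using the canonical identification of the vertical tangent space of $\RR{}$ at $\sM(w)$ with the fiber of $\RR{}$ over $\mathrm{P}\scoord$; once this bookkeeping is in place, orientation compatibility (via the ``future'' subbundle $\Rfuture{}\subset\RR{}$) follows directly from~(1).
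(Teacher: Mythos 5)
Your proposal is correct and follows essentially the same route as the paper: condition (d) via the graph/section observation, condition (f) via the vanishing of $\scoord_A\mul v^{AA'}\mul\Conj\scoord_{A'}$ for the null vector $v$ in the direction $\mathrm{P}\scoord$, and condition (c) via the same direct calculation identifying $\contactfib(\Xi)$ with the vertical derivative $\scoord_A\mul\xi^{AA'}\mul\Conj\scoord_{A'}$ of $\sM$. The only cosmetic difference is that you phrase (c) as ``$\contactfib$ annihilates $\ker d\sM$'' (which tacitly uses the surjectivity of $d\sM$, immediate from your own computation), whereas the paper exhibits the covector on $M$ explicitly as the projection of $T\RR{}$ onto the vertical direction along the tangent to the graph.
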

\begin{remark}
The total space of the bundle $\RR{}$ can be represented as $\sky{} \times \R$, 
and any $\msky x$ is the graph of a section $\HB_A \cdot x^{AA'} \cdot \Conj\HB_{A'}$.
\end{remark}
\begin{proof}
Obviously, a map sending skies to graphs over~$\sky{}$
satisfies the condition (d) of the definition of a conformal reference frame. 
The condition (f) is satisfied because 
$\sM(\sky v)$, where $v$ is a null vector, 
is tangent to the zero section at the point $\mathrm{P}v$. 
To prove (c) in an easy way, let’s investigate all $w\in\sky 0\Mink$ 
which doesn’t lose generality because a translation of~$\Mink$ 
corresponds to the fiberwise addition (of a fixed smooth section) in the total space $\RR{}$. 
By virtue of the triviality of~$\sky{}\Mink$ mentioned in Sec.~2.1, we have $T_w\sky{}\Mink = T_w\sky 0 \oplus T_0\Mink$, 
where $T_w\sky 0 \subset \ker\contactfib$ and, 
moreover, $\mfib_*(T_w\sky 0)$ consists of vectors tangent to~$\msky 0$ (the zero section of~$\RR{}$). 
We choose, as in (1) at the end of the preceding subsection, 
such $\scoord\in\Csky{} \noZero$ that $\mathrm{P}\scoord = w$. 
A direct calculation
$$
\contactfib(\Xi)\ \ =\ \ \scoord_{A}\mul\xi^{AA'}\mul\Conj\scoord_{A'}
\ \ =\ \ \frac{\partial\mfib}{\partial x}|_\scoord (\xi),\quad\text{where~} 
\xi := \xfib_*(\Xi),\ \Xi\in T_w\sky{}\Mink
$$
shows that the sought image of the form $\contactfib| _w$ by $\mfib$ 
is simply the projection of the space~$T_{(\mathrm{P}\scoord, 0)}\RR{}$ onto 
the vertical direction (tangent to the fibers of $\RR{}$). 
The image of the form $\contactfib$ at an arbitrary point $w\in\sky{}\Mink,\ x := \xfib(w)$ 
would differ only in the projection that would go along the tangent 
to the graph $\HB_A \cdot x^{AA'} \cdot \Conj\HB_{A'} = \sM x$ 
instead of the one to the zero section. 
\end{proof}
A direct interpretation of the constructed reference frame is that 
the total space $\RR{}$ is the space of null hyperplanes in~$\Mink$, 
and $\toM$-preimage of an element of~$\RR{}$ is given by
a constant null direction—an element of~$\sky{}$—and a hyperplane in~$\Mink$.
Choosing $\mathrm{P}\scoord\in\sky{}$ and $\chi\in\RR{}|_{\mathrm{P}\scoord}$ arbitrarily, 
we have the set~$\xfib(\mfib^{-1}(\chi))$ of all points $\Mink$, 
$\sM$-images of whose skies pass through $\chi$, 
to become a hyperplane~– solution of
the linear equation $\scoord_A \mul x^{AA'} \mul \Conj\scoord_{A'} = \chi(\scoord)$ at $x^{AA'}$. 
Presence of the natural action of the Poincaré group on~$\RR{}$ ensures 
that the set of {\em all} null hyperplanes in~$\Mink$ can be obtained in such a way.
These null hyperplanes can also be understood 
as the light cones of “points at infinity” 
and the very three-dimensional total space $\RR{}$~– 
as the piece $\mathcal{I}$ of the {\em conformal compactification} $\widehat\Mink$ of the Minkowski space,\footnote{
	The conformal compactification~$\widehat\Mink$ is explained in \cite{PenroseRindler} Chap.~9,
	\cite{Jadczyk} 2.1, 2.2,
	and \cite{Vladimirov}.
	A more general construction of the conformal boundary of a Lorentzian manifold
	is given in \cite{Hawking}~6.8.
} 
which brings the present example closer to the one given in Sec.~1.2 
and will be justified in the next subsection.

\subsection{Interpretation through the twistor correspondence}
The manifold of null lines in the space $\widehat\Mink$ 
admits a known description as the projective null twistor space
$\mathbf{PN} \subset \mathbf{P}(\CConjsky{}\times\spinorsSections{})\simeq\C\mathbf{P}^3$. 
$\mathbf{N}$ is given by $\Conj\pi_L\mul\omega^L + \pi_{L'}\mul\Conj\omega^{L'} = 0\,$\cite{TwistorProgramme} 
and it’s 
a smooth real hypersurface in~$\CConjsky{} \times \spinorsSections{}$, 
and $\mathbf{PN}$ lies in the respective complex projective space. 
If we restrict the consideration to the Minkowski space proper ($\X := \Mink$), 
then $\mathfrak{N} = \mathbf{P}\Nfin \subset \mathbf{PN}$ (hereinafter called the {\em affine part} of $\mathbf{PN}$).
Moreover, $\twId{\sky x}$ is given by the following equation:
\begin{equation}
\omega^L = i\mul x^{LL'} \mul \pi_{L'} \quad
\text {\cite {TwistorProgramme},} 
\tag{2}
\end{equation}
where $\CConjsky{}\ni\pi = \Conj\scoord$ are complex projective coordinates 
in the fibers of $\Conjsky{}\Mink$. 
$\Nfin$ is the union of all solutions of equation~(2) for all $x\in\Mink$, 
and $\Nfin = \mathbf{N} \setminus \{\pi = 0\}$. 
The smoothness of the mapping $\twId\,: \sky{}\Mink \to \mathbf{P}\Nfin$ is obvious. 
\begin{definition}
The {\em twistor contraction} $\tau: \mathbf{P}(\CConjsky{} \noZero\ \times \spinorsSections{}) \to \RC{}$ 
maps a point~$\mathrm{P}(\pi,\omega)$ to an element of the total space of the bundle~$\RC{}$ over $\Conjsky{}$ 
(identical to $\RC{}$ over $\sky{}$),
expressed with the point $\mathrm{P}\pi$ on the base  
and the “complex size” $\lambda\mul\pi \mapsto -i\mul\abs\lambda^2\mul\Conj\pi_L\mul\omega^L$.
\end{definition}
\begin{remark}
Well-definedness of the of the twistor contraction 
requires that specified (1,1)-homogeneous function on~$\{\lambda\mul\pi\}$, 
depending on the pair~$(\pi,\omega)$ as on a parameter, 
may not change when projective coordinates multiplied by a non-zero complex number. 
This requirement is obviously satisfied.
\end{remark}
\begin{remark}
The definitions of $\tau$ and $\mathbf{N}$, as well as the explicit form of $\Nfin$ imply that
$$
\Nfin\ \ =\ \ \{(\pi\in\CConjsky{}\noZero,\ \omega\in\spinorsSections{})\ |\ \tau(\pi,\omega)\in\RR{}\,\}
$$
(in other words, $\mathbf{P}\Nfin = \tau^{-1}(\RR{})$). 
Obviously, the $\tau$-preimage of every element of~$\RR{}$ is the affine part 
of a projective line in~$\mathbf{PN}$ (or a plane in~$\mathbf{N}$), 
but with a {\em constant} $\mathrm{P}\pi$ unlike those defined by the equation~(2).
\end{remark}
\begin{remark}
The map~$\toM$ introduced in the previous subsection coincides with the composition 
of the projection $\twId\,: \sky{}\Mink \to \mathbf{P}\Nfin$ with the twistor contraction $\tau$;
it’s demonstrated by contracting (2) with $-i\mul \Conj\pi_L\,$. 
Hence, $\tau$-preimage of each element $\chi\in\RR{}$ 
specifies a point at infinity in~$\widehat\Mink$, 
where null lines comprising 
the null hyperplane $\xfib(\mfib^{-1}(\chi))$ 
intersect.
\end{remark}
%
\begin{remark}
From 
the expression of $\mfib$ in terms of $\tau$ 
and the fact that the projective line~$\omega = 0$ is $\twId{\sky 0} \subset \mathbf{P}\Nfin$, 
we find that, similarly to the Proposition~1, 
the contact form on~$\mathbf{P}\Nfin$, restricted to~$\twId{\sky 0}$, 
admits the expression 
$$
\contact|_{\twId{\sky 0}}\ \ =\ \ d\tau|_{\{\,\omega\,=\,0\,\}}
\ \ =\ \ -i\mul\Conj\pi_L\mul d\omega^L|_{\{\,\omega\,=\,0\,\}}\,.
$$
In view of the homogeneity of $\tau$, the expression $d\tau$ 
consistently specifies 
a mapping $T\mathbf{P}\Nfin|_{\twId{\sky 0}} \to \RC{}$.

It is also easy to show that on the whole $\Nfin$:
$$
\contact_{\Nfin} 
\quad=\quad -i\mul(\Conj\pi_L\mul d\omega^L + \Conj\omega^{L'}\mul d\pi_{L'}).
$$
\end{remark}

\section{The flow of time equation}
In the general Lorentzian case we have to consider 
line bundles over {\em different} skies. 
Let’s $\cW x$, $\Rfuture x$, and so on
denote respective bundles over the sky $\sky x$. 
Similarly (with subscript “$x$”) the spaces of spinors, depending on~$x$, will be indicated.
\begin{remark}
It is easy to see that the inverse image $\xfib^*(T_x\X) = \Th\sky x$ of the tangent space 
is $\sky x \times T_x\X$ that turns to~$\sky{}(T_x\X)$ 
after the base is exchanged with the fiber. 
Moreover, $\sky{}(T_x\X) $ is isomorphic to~$\sky{}\Mink$ (considered in Sec.~2), 
and the isomorphism is defined up to the action of the Lorentz group.
\end{remark}
\begin{definition}
Let’s denote by $\RR{}\X$ 
the disjoint union of the signed bundles of sizes $\RR x$ for all $x\in\X$, 
which gives a smooth line bundle over~$\sky{}\X$. 
Similarly,, $\Rfuture{}\X \subset \RR{}\X $ is the disjoint union of the bundles of sizes $\Rfuture x$ for all $x\in\X$.
\end{definition}
\begin{definition}
The map $\s: \Th\sky{}\X \to \RR{}\X$ is defined 
fiberwise in terms of $\s_{T_x\X}: \Th\sky x \to \RR x$ for all $x\in\X$,
where $\s_{T_x\X}$ is identical to the map~$\sM$ introduced in Sec.~2.2. 
\end{definition}
\begin{remark}
For a chosen (arbitrarily) an orientation on~$T_x\X$, 
well-definedness of~$\s_{T_x\X}$ 
follows from the $\mathrm{SL}(2,\C)$-invariance of the space~$\spinorsSections x$ 
and the weights of the respective representation of the group in~$\RC x$. 
Reversing the orientation on~$T_x\X$ interchanges 
$\sky x$ and $\HB_A \in\spinorsSections x$ with $\Conjsky x$ and $\ConjHB_{A'} \in \ConjspinorsSections x$, respectively, 
but $\RR x$ and the mapping of the bundle $\Th\sky x$ to it don’t change.
\end{remark}
Now we are able to supersede the definition of the form~ $\contactfib$, given in~1.1, 
with the\vspace{-1.25ex}
\begin{definition}
$ \contactfib := \s \circ d\xfib$ is an 1-form on the total space $\sky{}\X $ with values in~$\RR{}\X$, 
defined by the composition of the horizontal projection $d\xfib: T(\sky{}\X) \to \Th \sky{} \X$ 
with the map~$\s$ introduced above. 
\end{definition}
\begin{remark}
The “new” form~$\contactfib: T\skyreg{} \to \RR{} \X$ 
and the $\R$-valued form defined in Sec.~1.1 up to a positive factor, 
coincide as distributions of oriented co-directions on~$\sky{}\X$.
\end{remark}
\begin{remark}
It follows directly from the definition of $\contactfib$ 
that the celestial transform is
expressed as 
$\s = d\xfib\,_*(\contactfib) $. 
\end{remark}
\begin{theorem}
For any conformal reference frame~$(\skyreg{}, M, \toM)$ for a Lorentzian manifold~$\X$ 
the projection $\bPm$ (see~1.3) satisfies
$$
\ker \bPm
\quad = \quad
\ker \contactfib|_{\skyreg{}}\,.
$$
\end{theorem}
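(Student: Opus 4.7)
The plan is to argue pointwise. Fix $w\in\skyreg{}$ and recall that $\bPm$ is the composition of $dj$ with the quotient projection $j^*(TM)\twoheadrightarrow\NM\skyreg{}$. Condition~(c) in the definition of a conformal reference frame says that the oriented line spanned by $\contactfib|_w$ in $T^*_w\skyreg{}$ lies in $dj^*(T^*_{j(w)}M)$; since $\contactfib$ is $\RR{}$-valued, this yields an $\RR{}|_w$-valued covector $\alpha$ on $T_{j(w)}M$ with $\contactfib|_w=\alpha\circ dj_w$. On the other hand, $d\xfib$ annihilates $\Tv\skyreg{}$, so from $\contactfib=\s\circ d\xfib$ the form vanishes on vertical vectors, and therefore $\alpha$ vanishes on the subspace $dj_w(\Tv_w\skyreg{})\subseteq T_{j(w)}M$. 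Hence $\alpha$ descends to a well-defined $\RR{}|_w$-valued covector $\bar\alpha$ on the quotient $\NM\skyreg{}|_w$, giving the clean factorization $\contactfib|_w=\bar\alpha\circ\bPm_w$.

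From this factorization the inclusion $\ker\bPm\subseteq\ker\contactfib|_{\skyreg{}}$ is immediate. For the reverse inclusion I would do a dimension count: condition~(d) says $dj_w|_{\Tv_w}$ has rank $2$, so since $\dim M=3$ the fiber $\NM\skyreg{}|_w$ is exactly one-dimensional. Thus $\bar\alpha$ is a linear map between two one-dimensional real spaces and is either zero or an isomorphism. If $\bar\alpha$ vanished at $w$, the above factorization would force $\contactfib|_w=0$, contradicting the remark in Sec.~1.1 that $\contactfib$ is nowhere zero. Therefore $\bar\alpha$ is injective, whence $\ker\contactfib|_{\skyreg{}}\subseteq\ker\bPm$, and equality follows.

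The main bit of care is to handle the fact that $\contactfib$ is $\RR{}\X$-valued rather than $\R$-valued: condition~(c) is phrased in terms of oriented one-dimensional distributions in $T^*$, so the covector $\alpha$ extracted from it is line-bundle-valued rather than scalar, and the descent to $\bar\alpha$ must be read in the tensor-product sense. Once that identification is made the argument is essentially linear algebra, and the real content is supplied by just three facts already established in the paper: $\contactfib$ never vanishes, $dj$ has rank $2$ on $\Tv$, and $\contactfib$ annihilates $\Tv$; no new geometric input is required.
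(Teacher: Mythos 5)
Your proof is correct, and it runs in the opposite direction from the paper's. The paper works entirely in the domain: it observes that $\ker d\mfib$ is three\-/dimensional, is contained in $\ker\contactfib$ by condition~(c), and meets $\Tv\skyreg{}$ trivially by condition~(d), so that $\ker\contactfib|_{\skyreg{}}=\ker d\mfib\oplus\Tv\skyreg{}$, which is exactly $\ker\bPm=d\mfib^{-1}\bigl(d\mfib(\Tv\skyreg{})\bigr)$. You instead work on the target side: you use condition~(c) to factor $\contactfib|_w=\alpha\circ d\mfib_w$, push $\alpha$ down to the one\-/dimensional fiber of $\NM\skyreg{}$ (legitimate because $\contactfib$ kills $\Tv\skyreg{}$), and invoke the nonvanishing of $\contactfib$ to see that the descended map $\bar\alpha:\NM\skyreg{}|_w\to\RR{}|_w$ is an isomorphism of lines. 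The inputs are identical (conditions~(c) and~(d), $\Tv\subseteq\ker\contactfib$, and $\contactfib\neq0$), so the two arguments are linear\-/algebraic duals of one another; what your route buys is that it establishes en passant exactly the content of the Corollary the paper states \emph{after} Theorem~1 (that $\bPm_*(\contactfib)$ is a well\-/defined, nowhere\-/vanishing isomorphism $\RR{}\skyreg{}\to\NM\skyreg{}$), i.e.\ you prove the Corollary first and read the theorem off from it, whereas the paper deduces the Corollary from the theorem. One small point worth making explicit in your write\-/up: since $d\mfib_w$ need not a priori be surjective, $\alpha$ is not unique, but any admissible choice of $\bar\alpha$ satisfies $\bar\alpha\circ\bPm_w=\contactfib|_w$, so the nonvanishing argument applies to whichever representative you pick and the conclusion is unaffected.
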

\begin{proof}
The kernel of~$d\mfib$ in~$T\skyreg{}$, according to the definition of the conformal reference frame, is three-dimensional, 
lies entirely in~$\ker\contactfib$, 
and has the zero intersection with $\Tv\skyreg{}$,
hence $\ker\contactfib|_{\skyreg{}} = {\ker d\mfib}\, \oplus \,{\Tv\skyreg{}}$.
Because 
the fiberwise normal bundle $\NM\skyreg x$—the range of~$\bPm $—is obtained 
from~$\mfib|_{\skyreg x}^*(TM)$ as the quotient by the image of~$d\mfib(T\skyreg x)$, 
the theorem is proved.
\end{proof}
\begin{corollary}
The map $d\msky x$ (derivative of the sky image with respect to~$x$) 
nullifies $d\xfib(\ker\contactfib)$, and only this. 
See the diagram at the end of~1.3. 
\end{corollary}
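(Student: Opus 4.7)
The strategy is to read the corollary as the equality $\ker d\msky x = d\xfib(\ker\contactfib|_{\skyreg{}})$ of subbundles of $\Th\skyreg{}$, and to derive it from the theorem together with the factorization $\bPm = d\msky x \circ d\xfib$ encoded in the diagram at the end of Section~1.3.

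First I would verify this factorization: by the definitions in Section~1.3, $\bPm$ is the composition of $dj$ with the quotient onto $\NM\skyreg{}$, and in a local trivialization $\partial j/\partial x$ depends only on the horizontal component of a tangent vector, whence $\bPm = d\msky x \circ d\xfib$. Also, $d\xfib: T\skyreg{} \to \Th\skyreg{}$ is surjective by its very definition as the quotient projection onto $T\skyreg{}/\Tv\skyreg{}$.

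The forward inclusion is then immediate from the theorem: for any $\xi \in \ker\contactfib|_{\skyreg{}}$ we have $\bPm(\xi) = 0$, i.e.\ $d\msky x(d\xfib(\xi)) = 0$, so $d\msky x$ vanishes on $d\xfib(\ker\contactfib)$. For the converse, given $\eta \in \ker d\msky x$, I would use the surjectivity of $d\xfib$ to pick any lift $\xi \in T\skyreg{}$ with $d\xfib(\xi) = \eta$; then $\bPm(\xi) = d\msky x(\eta) = 0$, so by the theorem $\xi \in \ker\contactfib$, and $\eta \in d\xfib(\ker\contactfib)$.

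I do not foresee any real obstacle here\,—\,once the theorem is in hand, the corollary is a short diagram chase in the triangle $\bPm = d\msky x \circ d\xfib$. The only minor point worth noting is that $\Tv\skyreg{} \subset \ker\contactfib$ (the form $\contactfib$ annihilates vertical vectors, by the remark in Section~1.1), so $\ker\contactfib$ is saturated for $d\xfib$ and $d\xfib(\ker\contactfib)$ is a genuine three-dimensional subbundle of the four-dimensional $\Th\skyreg{}$, matching the expected rank of $\ker d\msky x$.
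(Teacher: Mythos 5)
Your proposal is correct and follows exactly the route the paper intends: the corollary is presented as an immediate consequence of Theorem~1 via the factorization $\bPm = d\msky x \circ d\xfib$ in the diagram of Section~1.3, which is precisely your diagram chase (including the observation that $\Tv\skyreg{}\subset\ker\contactfib$ makes $d\xfib(\ker\contactfib)$ well-defined). Nothing is missing.
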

\begin{corollary}
The direct image of the 1-form~$\contactfib$ under~$\bPm$ 
is well-defined, and doesn’t vanish anywhere on~$\skyreg{}$, 
giving an isomorphism of 
the bundles $\RR{}\skyreg{} := \RR{}\X|_{\skyreg{}}$ and $\NM\skyreg{}$.
\end{corollary}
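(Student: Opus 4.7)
The plan is to read this corollary as essentially a repackaging of the Theorem, together with the observation that both $\bPm$ and $\contactfib$ have one-dimensional bundles as codomains over~$\skyreg{}$.

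First I would establish well-definedness of $\bPm_*(\contactfib)$. Because $\bPm$ is the tautological projection onto the quotient $\mfib^*(TM)/d\mfib(\Tv\skyreg{})$, the pushforward $\bPm_*(\contactfib): \NM\skyreg{} \to \RR{}\X|_{\skyreg{}}$ exists precisely when $\contactfib$ factors through $\bPm$, i.e.\ when $\ker\bPm \subseteq \ker\contactfib|_{\skyreg{}}$; this inclusion is the ``easy half'' of the Theorem. The rule $\bPm(v) \mapsto \contactfib(v)$ then descends to a smooth bundle morphism on~$\NM\skyreg{}$.

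Next I would check non-vanishing and the isomorphism property by a rank count. Both $\NM\skyreg{}$ and $\RR{}\X|_{\skyreg{}}$ are line bundles: the former because condition~(d) makes $d\mfib(\Tv\skyreg{})$ a rank-$2$ subbundle of the rank-$3$ bundle $\mfib^*(TM)$, the latter by the construction in~2.2. The form $\contactfib$ is nowhere zero, and $\bPm$ is fiberwise surjective: the Theorem's proof exhibits $\ker d\mfib$ as $3$-dimensional and transverse to~$\Tv\skyreg{}$, so $d\mfib$ is itself a submersion onto~$M$, and composing with the quotient onto~$\NM\skyreg{}$ preserves surjectivity. Thus $\contactfib$ and $\bPm$ have constant rank~$1$; the reverse inclusion $\ker\contactfib|_{\skyreg{}} \subseteq \ker\bPm$ from the Theorem then forces $\bPm_*(\contactfib)$ to be fiberwise injective between lines, and a fiberwise injection between line bundles is an isomorphism.

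I do not expect a serious obstacle: this is bookkeeping with ranks and kernels once the Theorem is in hand. The one spot warranting care is confirming that $\bPm$ is fiberwise surjective (so that $\bPm_*(\contactfib)$ is genuinely defined on all of $\NM\skyreg{}$ rather than on a proper subbundle), but that follows from the submersion property of~$\mfib$ already used inside the Theorem's proof.
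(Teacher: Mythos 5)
Your proof is correct and follows exactly the route the paper intends: the corollary is stated without an explicit proof, being a direct consequence of Theorem~1 (equality of kernels gives both the factorization through $\bPm$ and the fiberwise injectivity, and a fiberwise injection between the two line bundles $\NM\skyreg{}$ and $\RR{}\skyreg{}$ is an isomorphism). Your extra care about the fiberwise surjectivity of $\bPm$ is well placed and matches the assertion in the paper's proof of Theorem~1 that $\ker d\mfib$ is three-dimensional, i.e.\ that $\mfib$ is a submersion on~$\skyreg{}$.
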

\begin{theorem}
Let 
$\X$ be an oriented manifold with a conformal reference frame, 
satisfying the conditions in Theorem~1. 
We define, based on the last Corollary, the homomorphism of line 
bundles $a_{\toM} := \bPm_*(\contactfib)^{-1} : \RR{}\skyreg{} \to \NM\skyreg{}$.
\footnote{
	In other words, $a_{\toM} = \contactfib_* (\bPm)$ and the solution of the equation $a_{\toM} \circ \contactfib = \bPm $.
} 
Then the derivative (see Sec.~1.3) 
of the family $\{\msky x\} = \{\mfib(\skyreg x)\}$ satisfies the so-called “flow of time equation”:
$$
d_{AA'}\,\msky x = a_{\toM}\mul(\HB_A\cdot\ConjHB_{A'})
\quad\Leftrightarrow\quad
\bPm_*(\contactfib)\cdot d_{AA'}\,\msky x = \HB_A\cdot\ConjHB_{A'}\,,$$
where $\HB_A $ and $\ConjHB_{A'}$ are 
bases in $\spinorsSections x$ and $\ConjspinorsSections x$ complex conjugate to each other 
(or, equivalently, the coordinates in~$\Csky{x\,}$ and $\CConjsky{x\,}$), respectively, 
in which the algebraic spinor correspondence for the differential~$d_{AA'}\,\msky x$ is expressed.
\end{theorem}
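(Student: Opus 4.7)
The plan is to chase the diagram at the end of Section~1.3 and simply unfold the new definition of the contact form, using as the sole nontrivial input Theorem~1 and its Corollary (which together say that $\contactfib$ factors through $\bPm$ as a bundle isomorphism $\NM\skyreg{} \to \RR{}\skyreg{}$).

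First I plan to establish the bundle identity
$\bPm_*(\contactfib) \circ d\msky x = \s$ as maps $\Th\skyreg{} \to \RR{}\skyreg{}$.
To verify it, pick any $\Xi \in T\skyreg{}$ and set $\xi := d\xfib(\Xi)$.
The updated definition of the contact form gives $\contactfib(\Xi) = \s(\xi)$;
the diagram at the end of Section~1.3 gives $\bPm(\Xi) = d\msky x(\xi)$;
and the direct image $\bPm_*(\contactfib)$, which is well-defined by Theorem~1 and an isomorphism by the last Corollary, is characterised by $\contactfib = \bPm_*(\contactfib) \circ \bPm$. Composing these three observations yields the desired identity.

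Second, I would translate it into spinor form. By the definition of $\s$ at a point $w \in \skyreg x$, with $\HB \in \Csky x \noZero$ representing $w$, the celestial transform sends a horizontal vector $\xi$ to the $(1,1)$-homogeneous section $\HB_A \cdot \xi^{AA'} \cdot \ConjHB_{A'}$ of $\RR x$. Writing $d\msky x$ in spinor-indexed form as $\xi \mapsto \xi^{AA'} \, d_{AA'}\msky x$, the identity $\bPm_*(\contactfib) \circ d\msky x = \s$ becomes, after comparing coefficients of $\xi^{AA'}$,
$$
\bPm_*(\contactfib) \cdot d_{AA'}\,\msky x \;=\; \HB_A \cdot \ConjHB_{A'},
$$
which is the second form in the statement. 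Applying $a_{\toM} = (\bPm_*(\contactfib))^{-1}$ from the last Corollary to both sides delivers the first form.

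The only subtlety worth flagging is bookkeeping: $\HB_A$, $\ConjHB_{A'}$ and $a_{\toM}$ all depend not only on $x$ but also on the sky point $w \in \skyreg x$, so the equation must be read as an identity of $\NM\skyreg{}$-valued (respectively $\RR{}\skyreg{}$-valued) sections over $\skyreg{}$, spinor-indexed at $x$. The orientation hypothesis on $\X$ enters only to make the splitting $\RC x = \cW x \otimes \ConjcW x$ coherent across skies, exactly as noted in the Remark preceding the updated definition of $\contactfib$. I do not anticipate any genuine obstacle, since all the substantive work—identifying $\ker\bPm$ with $\ker\contactfib|_{\skyreg{}}$—has already been carried out in Theorem~1; Theorem~2 is the direct reading of that identification in the language of the algebraic spinor correspondence.
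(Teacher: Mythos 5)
Your proposal is correct and follows essentially the same route as the paper: it combines the factorization $\bPm = d\msky x \circ d\xfib$ from the diagram in Sec.~1.3 with the definition $\contactfib = \s\circ d\xfib$ to obtain $\bPm_*(\contactfib)\circ d\msky x = \s$ (the paper writes this as $\bPm_*(\contactfib) = (d\msky x)_*(\s)$), and then reads off the spinor form from $\s\xi = \HB_A\cdot\xi^{AA'}\cdot\ConjHB_{A'}$. Your version merely spells out the pointwise diagram chase and the well-definedness of the pushforward that the paper leaves implicit.
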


\begin{proof}
From the identity $\bPm = d\msky x \circ d\xfib $ presented 
in the diagram in Sec.~1.3 
and the definition of~$\contactfib$, we deduce that $\bPm_*(\contactfib) = (d\msky x)_*(\s)$. 
And $\s\xi = \HB_A \cdot \xi^{AA'} \cdot \ConjHB_{A'}$ by construction, where $\xi\in T_x\X$, 
which proves the theorem.
\end{proof}

\section{Other dimensions of space}
As a generalization of Lorentzian manifolds, we can consider 
pseudo-Riemann manifolds of the signature~(1,{\it d}), i.e. one plus and {\it d} minuses 
(where {\it d} is a natural number), and the time orientation at each point. 
Section~1 is applicable to this case with obvious changes on the dimension: 
a sky becomes $S^{d-1}$, $\Delta$ becomes the $d\!-\!1$-dimensional ball, and so on.
The complex algebra in Section~2 does not admit an obvious generalization, 
because only the case (1,3) has the spinor group isomorphic to a complex special linear group. 
Nevertheless, the space $\Md$ admits the conformal compactification, 
and the invariant frame of reference built in Sec.~2.3, 
making it possible to define the bundle of sizes $\RR{}\X$ as the range of the contact form. 
Theorem~1 also holds for any {\it d}.

\section{Appendix – The causality} 
This part of the paper isn’t included to the peer-reviewed text going to publication. 
Note that this research was made in early 2016 and 
{\bf some ideas expressed here do not agree 
with the recent (as of May, 2017) author’s understanding of the causality.}

The causal relation in Relativity is understood in terms of Lorentzian manifolds. 
This is fairly convenient in Special Relativity, but 
has several shortcomings for arbitrary Lorentzian space-times, 
and in General Relativity when geometry (as the metric tensor) 
effectively becomes one of the field variables.
The manifold substantionalism leads to causal relation 
expressed in terms of causal paths and sets
that can be seen in \cite{LowConjecture} and elsewhere,
but it’s presently unknown whether this concept matches the fundamental physical one.

The aim of this section is to propose a construction of the causality
consistent with space-time that is a Lorentzian manifold~$\X$,
but based on other, more quantum-friendly foundations. 
This construction of the causality will somewhere differ from predictions of General Relativity. 
More precisely, let’s assume $\X$ as some approximation to the physical world.
Now we want to:
\begin{itemize} 
\item Give a description of the causal relation in terms external to~$\X$. 
\item Propose a kind of absolute reference frame suitable for our universe. 
\item Permit for interpretation of Quantum Mechanics along the lines of \cite{PresheafQR}.
\end{itemize}
As defined in Sec.~1.2, there may be {\em several} (essentially) different 
conformal reference frames for the same $\X$. 
For Minkowski space, we obviously identify our $\RR{}$ with $\mathcal{I} = \mathcal{I}^+ = \mathcal{I}^-$,
whereas for another similar $\X$, that is only asymptotically Minkowski space, 
projections to the future null infinity $\mathcal{I}^+$ and to the past null infinity $\mathcal{I}^-$
will result in different reference frames.
Does the current physical cosmology propose any hint about the natural conformal reference frame?
Fortunately, the “cosmological censorship” principle implies 
that any null geodesic extended to the past must meet the Cosmological Singularity
of the FLRW cosmology.
Since the latter is 3-dimensional conformal boundary of our space-time (see e.g. \cite{Hawking}),
it’s an obvious candidate for~$M$.
Let $M$ be the Cosmological Singularity from now on 
and let $\toM$ denote the natural operation of geodesic continuation towards the Singularity 
and taking the intersection point then, similarly to the Example from Sec.~1.2.
This conformal reference frame will be called the {\em absolute reference frame}.

For such a universe—without enough curvature to make gravitational lensing—%
as a pure Robertson–Walker space,
the picture will not differ greatly 
from the one explained in Sec.~2.3 for the Minkowski space.
$\skyreg{}$ should be taken as the whole $\sky{}\X$, 
and all sky images will be smooth spheres embedded to~$M$.
Causality relation will admit a simple geometric description.
Let $\mathfrak{P}_x$ denote the union of~$\msky x$ with its interior.
Then any event belongs to the causal past of~$x$ if and only if its sky image lies in~$\mathfrak{P}_x$.
This can be formulated for the Minkowski case as well (albeit with little merit),
replacing “interior” with “the negative side in~$\RR{}$”.

For a more realistic cosmology $\mfib_*\,|_{\Tv\skyreg{}}$ will degenerate somewhere.
Similarly to the Example, $\skyreg{}$ can be defined as the open set where $\toM$ doesn’t
degenerate. 
But we can also consider the {\em full sky image} of~$x$, 
since $\mfib$ is defined everywhere on~$\sky{}\X$ and is continous.
It is a topologically closed surface in~$M$ despite singularities (but not necessarily a submanifold).
Then, $\mathfrak{P}_x$ can be taken as a 3-dimensional subset of~$M$ as well: 
the union of the full sky image of~$x$ with its interior.
A redefined “causal relation” depending of $\mathfrak{P}_x$ 
may be 
weaker that the one of General Relativity even in some globally hyperbolic cases.
Moreover, a realistic model of the universe can’t be globally hyperbolic.
Particularly, the (absolute reference frame)-based causality relation 
abolishes the event horizon in a black hole (which could shed light to the “information paradox”).

Connecting this to the “Locale of Time” concept requested in~\cite{PresheafQR}~1.3, 
a possible construction is the space of {\large closed subsets of~$M$}.
It’s a bounded join-semilattice (with the set union “$\cup$” as the join)
and has a natural non-Hausdorff topology specified with the base
$$
\left\{\,
	\left\{\,B\text{~is a closed subset of~}M\ |\ B\cap K = \emptyset\,\right\}\ |
	\ K\Subset M\,
\right\}.
$$
Substituting $K := \mathfrak{P}_x$ for any space-time event~$x$, 
we have an open set in the Locale of Time necessary for 
definition of open balls in the Space of Ultimations discussed in that paper.
\vspace{4ex}\strut

\pagebreak[3]


\begin{thebibliography}{1} 
\bibitem{LowConjecture} V. Chernov and S. Nemirovski,
	“Legendrian links, causality, and the Low conjecture”,
	{\it Geom. Funct. Anal.}, {\bf 19} (2010), 1320–1333;
	arXiv:0810.5091. 

\bibitem{TwistorProgramme} R. Penrose, “The twistor programme”,
	{\it Reports on Mathematical Physics}, {\bf 12} (1977), 65–76.
\bibitem{PenroseWard} R. Penrose and R.~S. Ward,
	“Twistors for flat and curved space-time”,
	{\it General Relativity and Gravitation}, {\bf 2} (1980), 283–328.
\bibitem{LeBrun} C.~R. LeBrun, “Twistors, Ambitwistors and Conformal Gravity”,
	in: {\it Twistors in Mathematics and Physics}
	(London Mathematical Society Lecture Note Series,~{\bf 156}), 71–86,
	Cambridge University Press, Cambridge, 1990.
\bibitem{Vladimirov} V.~S. Vladimirov and A.~G. Sergeev,
	“Compactification of Minkowski space and complex analysis in the future tube” {\small [in Russian]},
	{\it Annales Polonici Mathematici}, {\bf 46} (1985) 441–454.
\bibitem{PenroseRindler} R. Penrose and W. Rindler
	{\it Spinors and Space-time: Spinor and twistor methods in space-time geometry}
	Cambridge University Press, Cambridge (1988).
\bibitem{Hawking} S.~W. Hawking and G.~F.~R. Ellis
	{\it The large scale structure of space-time}
	Cambridge University Press, Cambridge (1973).
\bibitem{Jadczyk}
	A. Jadczyk {\it Conformally Compactified Minkowski Space: Myths and Facts},
	Prespacetime Journal, {\bf 3:2} (2012) 131–140
	\url{https://www.researchgate.net/publication/259136606_Conformally_Compactified_Minkowski_Space_Myths_and_Facts}
\bibitem{PresheafQR}
	I.~V. Maresin
	“The presheaf of Quantum Realities and constructions of the space-time from the Space of Ultimation”,
	arXiv:1212.5585. 
\end{thebibliography}
\end{document}